\renewcommand\@biblabel[1]{#1.}
\tikzset{cross/.style={cross out, draw=black, minimum size=2*(#1-\pgflinewidth),
    inner sep=0pt, outer sep=0pt},
%default radius will be 1pt.
cross/.default={1pt}
}
\newcommand{\real}{\mathbb{R}}
\newcommand{\posint}{\mathbb{Z}^+}
\newcommand{\intint}[2]{[#1,\,#2]}
\newcommand{\indicator}[1]{\mathbbm{1}_{#1}}
\newcommand{\defeq}{\mathrel{\raisebox{-0.3ex}{$\overset{\text{\tiny def}}{=}$}}}
\newcommand{\condSet}[2]{\left\{#1 \;\middle|\; #2\right\}}
\newcommand{\condExp}[2]{\operatorname{E}\left[#1 \;\middle|\; #2\right]}
\newcommand{\captrfn}{capability transfer function}
\newcommand{\CapTrFn}{Capability Transfer Function}
\newcommand{\agmg}{MGMG}
\newcommand{\capA}{C_{\operatorname{A}}}
\newcommand{\capB}{C_{\operatorname{B}}}
\newcommand{\fA}{f_{\operatorname{A}}}
\newcommand{\fB}{f_{\operatorname{B}}}
\newcommand{\fAp}{f_{\operatorname{A}}'}
\newcommand{\fBp}{f_{\operatorname{B}}'}
\DeclareMathOperator{\seg}{Seg} % number of segments of a function
\newcommand{\stgLvl}[1]{\mathcal{L}_{#1}} % strategy level
\newcommand{\stistgLvl}[1]{\widetilde{\mathcal{L}}_{#1}} % strict strategy level
\newcommand{\payAsym}{u_{\operatorname{A}}}
\newcommand{\payBsym}{u_{\operatorname{B}}}
\newcommand{\payA}[2]{\payAsym\left(#1,\, #2\right)}
\newcommand{\payB}[2]{\payBsym\left(#1,\, #2\right)}
\DeclareMathOperator{\setN}{\mathcal{N}}  % set of players
\DeclareMathOperator{\setg}{\mathcal{G}}  % set of gold locations
\DeclareMathOperator{\setm}{\mathcal{M}}  % set of mine locations
\DeclareMathOperator{\nrg}{\prescript{\#}{}{g}}  % number of gold
\DeclareMathOperator{\nrm}{\prescript{\#}{}{m}}  % number of mine
\newcommand{\idiv}[2]{\left\lfloor\frac{#1}{#2}\right\rfloor} % integer div
\DeclareMathOperator{\discup}{\mathcal{D}^{\uparrow}}
\DeclareMathOperator{\discdown}{\mathcal{D}^{\downarrow}}
\newcommand{\card}[1]{\left\lvert#1\right\rvert} % cardinality
\newcommand{\V}[1]{\bm{#1}}
\newcommand{\vc}{\V{c}}
\newcommand{\vs}{\V{s}}
\newcommand{\RNum}[1]{\uppercase\expandafter{\romannumeral #1\relax}}
\newcommand{\X}{,\,\allowbreak}
\newcommand{\algotext}[2]{\State \Comment{
    \parbox[t]{.93\linewidth-(3ex*#1)}{#2}}
}
\newenvironment{enuminline}
{\begin{enumerate*}[label=(\roman*),itemjoin={{; }},itemjoin*={{; and }}]}
{\end{enumerate*}}
\begin{document}
\title{Mixed Capability Games}
%
%\titlerunning{Abbreviated paper title}
% If the paper title is too long for the running head, you can set
% an abbreviated paper title here
%
%\author{First Author\inst{1}\orcidID{0000-1111-2222-3333} \and
%Second Author\inst{2,3}\orcidID{1111-2222-3333-4444} \and
%Third Author\inst{3}\orcidID{2222--3333-4444-5555}}
%\author{Anonymous}
%
%\authorrunning{F. Author et al.}
%\authorrunning{}
% First names are abbreviated in the running head.
% If there are more than two authors, 'et al.' is used.
%
%\institute{Anonymous}
%\institute{Princeton University, Princeton NJ 08544, USA \and
%Springer Heidelberg, Tiergartenstr. 17, 69121 Heidelberg, Germany
%\email{lncs@springer.com}\\
%\url{http://www.springer.com/gp/computer-science/lncs} \and
%ABC Institute, Rupert-Karls-University Heidelberg, Heidelberg, Germany\\
%\email{\{abc,lncs\}@uni-heidelberg.de}}
%
\author{Kai Jia \and Martin Rinard \and Yichen Yang}
\institute{Department of Electrical Engineering and Computer Science, \\
Massachusetts Institute of Technology, USA \\
\email{\{jiakai,rinard,yicheny\}@csail.mit.edu}}
%
%\authorrunning{F. Author et al.}
\maketitle              % typeset the header of the contribution

\begin{abstract}
    We present a new class of strategic games, \emph{mixed capability games}, as
    a foundation for studying how different player capabilities impact the
    dynamics and outcomes of strategic games. We analyze the impact of different
    player capabilities via a \emph{capability transfer function} that
    characterizes the payoff of each player at equilibrium given capabilities
    for all players in the game. In this paper, we model a player's capability
    as the size of the strategy space available to that player. We analyze a
    mixed capability variant of the Gold and Mines Game recently proposed by
    \citet{ yang2022impact} and derive its \captrfn{} in closed form.
\end{abstract}

% vim: tw=80 filetype=tex foldmethod=marker foldmarker=f{{{,f}}} spell spelllang=en

\section{Introduction}

Player capabilities can significantly impact the dynamics and outcomes of
strategic games. Recently, \citet{yang2022impact} analyzed how different player
capabilities affect the social welfare in several congestion games. The research
models player strategies as programs in a domain-specific language and models
the capability of each player as the size of the programs available to that
player. All players in a given game have the same capability, with player
capabilities varying across games but not within the same game.

We present \emph{mixed capability games} as a general framework for studying
games in which players have different capabilities, both within the same game
and across different games. To capture how game outcomes depend on different
player capabilities, we propose analyzing a \emph{\captrfn} that precisely
quantifies the payoffs of individual players given the capabilities of all
players in a game. \Cref{sec:mcg} presents the concepts in our framework.
\Cref{sec:mgmg} presents an analysis of a mixed capability game, the Mixed Gold
and Mines Game, and derives closed-form expressions for the \captrfn{} of this
game.

% vim: tw=80 filetype=tex foldmethod=marker foldmarker=f{{{,f}}} spell spelllang=en

\section{Mixed Capability Games and \CapTrFn\label{sec:mcg}}

We model the capability of each player as the size of the strategy space
available to that player. We first present formal definitions for pure Nash
equilibria of normal-form games \citep{nisan2007algorithmic}, then extend the
definitions to mixed Nash equilibria.

\begin{definition}
    A \emph{mixed capability game} is a tuple $G = (\setN \X
        (b_i)_{i\in\setN} \X
        (\stgLvl{j}^i)_{i\in\setN,\, 1\leq j \leq b_i} \X
        (u_i)_{i\in\setN}
    )$ where:
    \begin{itemize}
        \item $\setN = \{1,\ldots,n\}$ is the set of players.
        \item $b_i \in \posint$ is the maximal capability of player $i$.
        \item $\stgLvl{j}^i$ is the strategy space of player $i$ when they have
            capability $j$. We also require that the strategy spaces of a player
            form a hierarchy: $\forall 1 \leq j < b_i:\: \stgLvl{j}^i \subsetneq
            \stgLvl{j+1}^i$, i.e., a player has more strategies to choose from
            when they have higher capability.
        \item $u_i: \stgLvl{b_1}^1 \times \cdots \times \stgLvl{b_n}^n \mapsto
            \real$ is the payoff function that computes the payoff for player
            $i$ given the strategies chosen by all players.
    \end{itemize}
\end{definition}

A specification of the actual capabilities of players is necessary to determine
the outcome of the game.

\begin{definition}
    A \emph{capability profile} for a mixed capability game is a tuple of
    integers $\vc = (c_1, \ldots, c_n) $ where $1 \leq c_i \leq b_i$. A
    capability profile determines the strategy spaces of the players. Player $i$
    can choose strategies only from $\stgLvl{c_i}^i$.
\end{definition}

Given a capability profile $\vc = (c_1, \ldots, c_n)$, a \emph{strategy profile}
is a tuple $\vs = (s_1, \ldots, s_n)$ where $s_i \in \stgLvl{c_i}^i$ that
specifies the strategies chosen by all players. A strategy profile is a
\emph{pure Nash equilibrium} if no player can improve their payoff by
unilaterally changing their strategy: $\forall 1 \leq i \leq n:\: u_i(\vs) =
\max_{s_i' \in \stgLvl{c_i}^i} u_i(s_i',\,\vs_{-i}) $. The notation $(s_i',\,
\vs_{-i})$ denotes a new strategy profile in which player $i$ plays strategy
$s_i'$ and all other players play the same strategy as in $\vs$.

\begin{definition}
    A \captrfn{} of a mixed capability game is a function $f: \intint{1}{b_1}
    \times \cdots \times \intint{1}{b_n} \mapsto 2^{\real^n}$ where $\intint{a}
    {b}$ denotes the integers between $a$ and $b$, and $2^S$ is the power set of
    a set $S$. The \captrfn{} computes the set of player payoffs at equilibrium
    for a capability profile. Formally, given a capability profile $\vc = (c_1,
    \ldots, c_n)$, $f(\vc)$ is a set such that $(y_1, \ldots, y_n) \in f(\vc)$
    if and only if there is a pure Nash equilibrium $\vs = (s_1, \ldots, s_n)$
    for which $s_i \in \stgLvl{c_i}^i$ and $y_i = u_i(\vs)$.
\end{definition}

The \captrfn{} contains detailed information about the game's behavior under
varying player capabilities. \Cref{eg:captrfn:cap-pos} illustrates how to use a
\captrfn{} to define the  higher level concept of a \emph{capability-positive
game}.

\begin{example}
    \label{eg:captrfn:cap-pos}
    Capability-positive games \citep{yang2022impact} are games in which
    \begin{enuminline}
        \item all players share the same capability
        \item social welfare at equilibrium cannot decrease as players become
            more capable
    \end{enuminline}.
    Such games can be defined using the \captrfn{} for that game. A game is capability-positive if
    $\max W_b \leq \min W_{b+1}$ where $W_b \defeq \condSet{\sum_{j\in i}j}{i
    \in f(b, \ldots,b)}$. Note that $W_b$ is the set of social welfare at
    equilibrium defined via the \captrfn{} of this game.
\end{example}

We extend the definitions to games without pure Nash equilibria. We consider
\emph{mixed Nash equilibria} in which players act stochastically. All finite
games have mixed Nash equilibria\citep{nash1950equilibrium}. Given a capability
profile $\vc = (c_1, \ldots, c_n)$, the strategy of a player $i$ is a
distribution over possible actions, denoted as $P(a|s_i)$ where $a \in
\stgLvl{c_i}^i$. Player $i$ receives expected payoff $\condExp{u_i}{\vs}$:
\begin{align*}
    \condExp{u_i}{\vs} = \sum_{a_j \in \stgLvl{c_j}^j} u_i(a_1,\,\ldots,\,a_n)
        P(a_1|s_1) \cdots P(a_n|s_n)
\end{align*}
A strategy profile is a mixed Nash equilibrium if no player can unilaterally
change their own distribution to improve their expected payoff. In this case,
the \captrfn{} is defined as the set of expected payoffs of all mixed Nash
equilibria given a capability profile.

\begin{definition}
    The \captrfn{} of a mixed capability game with mixed Nash equilibria is a
    function $f: \intint{1}{b_1} \times \cdots \times \intint{1}{b_n} \mapsto
    2^{\real^n}$. Given a capability profile $\vc = (c_1,
    \ldots, c_n)$, $f(\vc)$ is a set such that $(y_1, \ldots, y_n) \in f(\vc)$
    if and only if there is a mixed Nash equilibrium $\vs = (s_1, \ldots, s_n)$
    for which $s_i$ defines a distribution over $\stgLvl{c_i}^i$ and $y_i =
    \condExp{u_i}{\vs}$.
\end{definition}

One natural question regarding mixed capability games is whether increasing the
capability for one player does not make this player receive less payoff.
Formally, let $f_i(\vc) \defeq \condSet{y_i}{(y_1, \ldots, y_n) \in f(\vc)}$
denote the set of payoffs of player $i$ at equilibrium, then the question is
whether $\min f_i(\vc) \leq \max f_i(\vc')$ for each $i$ where $\vc = (c_1,
\ldots, c_i, \ldots, c_n)$ and $\vc' = (c_1, \ldots, c_i', \ldots, c_n)$ with
$c_i' > c_i$. \Cref{eg:captrfn:decr} shows that this is not necessarily true for
Nash equilibria since the player with increased capability may switch to another
strategy, which triggers responses of other players that ultimately reduce the
payoff of the initial player. Note that in a Stackelberg game \citep{
simaan1973stackelberg} where the leader announces their strategy before others
simultaneously choose their responses, the \captrfn{} is monotonic for the
leader.

\begin{example}
    \label{eg:captrfn:decr}
    Consider a two-player two-action bimatrix game. Player 1 is the row player
    with two possible capabilities: $\stgLvl{1}^1 = \{1\}$ and $\stgLvl{2}^1 =
    \{1, 2\}$. Player 2, the column player, has one capability: $\stgLvl{1}^2 =
    \{1, 2\}$. Their payoff matrices are:
    \begin{align*}
        u_1 =
        \begin{pmatrix}
            1 \; & -1 \\
            2 \; & 0
        \end{pmatrix} \hspace{3em}
        u_2 =
        \begin{pmatrix}
            2 \; & 1 \\
            1 \; & 2
        \end{pmatrix}
    \end{align*}
    When player 1 is has capability 1, they can only play the first row, and
    player 2 plays the first column, which gives payoffs 1 and 2 for each player
    respectively. Therefore, we have $f(1,\,1) = \{(1,\, 2)\}$ for the \captrfn.
    When player 1 is allowed to use full capability, the only Nash equilibrium
    is (second row, second column), which gives $f(2,\,1) = \{(0,\, 2)\}$. The
    \captrfn{} decreases for player 1 even though player 1's capability
    increases.
\end{example}

% vim: tw=80 filetype=tex foldmethod=marker foldmarker=f{{{,f}}} spell spelllang=en

\section{Mixed Gold and Mines Game\label{sec:mgmg}}

\begin{figure}[t]
    \centering
    \begin{adjustbox}{width=0.6\textwidth}
        \begin{tikzpicture}[
    dot/.style = {minimum width=2em, minimum height=2em}
]

\definecolor{linecolor}{rgb}{0.8,0.8,0.8}
\definecolor{funccolorA}{rgb}{1.0, 0.0, 0.0} % red
\definecolor{funccolorB}{rgb}{0.0, 0, 1.0} % blue

\draw[linecolor, ultra thin] (-1, 0) -- (7.5, 0);
\draw[linecolor, ultra thin] (-1, 0.7) -- (7.5, 0.7);

\foreach \i in {0,1,2.5} {
    \filldraw[black] (\i*2,     0.7) circle (2pt) node{};
    \filldraw[black] (\i*2+0.5, 0) circle (2pt) node{};
    \draw (\i*2+1,   0.7) node[cross=2pt] {};
    \draw (\i*2+1.5, 0) node[cross=2pt] {};
}

\node[dot] at (4.2, 0.35) {$\dots$};

\draw[funccolorA, dashed, thick] (-0.5, 0.7) -- (7, 0.7);
\draw[funccolorB, dashed, thick] (-0.5, 0.0) -- (6.25, 0.0) --
            (6.25, 0.7) -- (7, 0.7);

\end{tikzpicture}
    \end{adjustbox}
    \caption{
        An example \agmg{} instance. Each dot (resp. cross) is a gold (resp.
        mine). The dashed lines represent a PNE when $\capA=1$ and $\capB=2$
        (with $\rho < -\mu < 1$).
        \label{fig:agmg-illu}
    }
    \vskip-.5em
\end{figure}
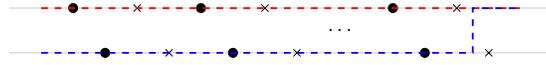

We derive exact expressions for the \captrfn{} of an asymmetric version of the
alternating ordering Gold and Mines Game, a special case of distance-bounded
network congestion games originally proposed by \citet{ yang2022impact}. We name
this new game the Mixed Gold and Mines Game (\agmg). Unlike the previous Gold
and Mines Game of \citet{yang2022impact} in which all players in the same game
have the same capability, in a single Mixed Gold and Mine Game, players may have
different capabilities.

\agmg{} is a two-player congestion game parameterized by five numbers $(
M\in\posint \X \rho\in\real \X \mu\in\real \X \capA\in\posint \X \capB\in\posint)
$. \agmg{} has resources arranged as a specific pattern; players use line
segments to cover resources to receive payoffs. As in all congestion games,
\agmg{} games always have pure Nash equilibria~\citep{rosenthal1973class}.

\paragraph{Resource Layout:} Each \agmg{} game has $4M$ resources arranged on
two lines. Each resource is either a gold site or a mine site. Each line
contains $M$ gold sites and $M$ mine sites in alternating order. Resources are
placed at distinct horizontal locations $0, 1, \cdots, 4M-1$. For the resource
at location $i$, $y_i = (i + 1) \bmod 2$ indicates which line it is placed on,
and $t_i = \indicator{i \bmod 4 \,\leq\, 1}$ indicates whether it is a gold site
($t_i=1$) or a mine site ($t_i=0$).

\paragraph{Game Objective:} Two players maximize their payoff by using line
segments to cover the resources. A \emph{delay function} $r_t(n)$, where
$t\in\{m, g\}$ is the resource type and $n \in \{1,\,2\}$ is the number of
players covering the resource, specifies the payoff for covering a resource. For
gold sites, $r_g(1) = 1$ and $r_g(2)=\rho$, where $0 < \rho < 1$ so that if two
players cover a resource they receive a smaller payoff. For mine sites, $r_m(n)
= \mu < 0$ is a constant penalty.

\paragraph{Strategy Space:} Each player $p \in \{A,\,B\}$ uses a function $f_p:
[0,\, 4M-1]\mapsto \{0,\,1\}$ to specify which line player $p$ covers at each
horizontal location. Player $p$ covers the resource at location $i$ if $f_p(i) =
y_i = (i+1)\bmod 2$. The strategy space $\stgLvl{C}$ of a player with capability
$C$ contains all functions with no more than $C$ segments:
\begin{align*}
    \seg(f) &\defeq
        \left(\sum_{i=0}^{4M-2}\indicator{f(i)\neq f(i+1)}\right) + 1 \\
    \stgLvl{C} &\defeq \condSet{f: [0,\, 4M-1]\mapsto \{0,\,1\}}{\seg(f)
        \leq C}
\end{align*}
We use $\capA$ and $\capB$ to denote the capabilities of player A and B
respectively, and use $\fA(\cdot)$ and $\fB(\cdot) $ for their strategies. Note
that \citet{yang2022impact} shows that the capability bound has a  natural
interpretation as the size of programs in a Domain-Specific Language (DSL)
describing the strategy space.

Below is our main result:
\begin{theorem}
    \label{thm:agmg}
    Given an instance of \agmg{} parameterized by $(M \X \rho \X \mu \X \capA \X
    \capB)$ that satisfies $0 < \rho < -\mu < 1$, in a pure Nash equilibrium of
    this game, the players receive the following payoffs $\payAsym$ and
    $\payBsym$:
    \begin{align*}
        \payAsym &= \idiv{\capA'+t-1}{2}\rho - \idiv{\capA' - t}{2}\mu
                +\idiv{\capB'-t}{2}(\rho-1) + (\mu+1)M \\
        \payBsym &= \idiv{\capB'-t}{2}\rho -  \idiv{\capB'+t-1}{2}\mu
                +\idiv{\capA'+t-1}{2}(\rho-1) + (\mu+1)M \\
        \text{where} & \\
        \capA' &= \min(\capA,\, 2M+1), \hspace{1em}
        \capB' = \min(\capB,\, 2M+1), \hspace{1em}
        t \in \{0,\,1\}
    \end{align*}
    Three cases determine the value of $t$:
    \begin{itemize}
        \item When $\max(\capA,\,\capB) \leq 2M$, there are two classes of Nash
            equilibria distinguished by $t=0$ and $t=1$.
        \item When $\min(\capA,\, \capB) \leq 2M < \max(\capA,\,\capB)$, $t=0$
            if $\capA \leq 2M$ and $t=1$ if $\capB \leq 2M$.
        \item When $\min(\capA,\,\capB) \geq 2M+1$, there is one Nash
            equilibrium. The above formulas give the same payoffs regardless of
            $t=0$ or $t=1$.
    \end{itemize}
\end{theorem}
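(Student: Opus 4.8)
The plan is to analyze the game through best responses under a segment budget, exploiting the payoff ordering $0 < \rho < -\mu < 1$, and then to pin down the full set of equilibrium payoffs. First I would rewrite each player's payoff as a sum of per-resource contributions: covering a gold alone is worth $1$, covering a contested gold is worth $\rho$, and covering any mine is worth $\mu$. The strategic content is then captured by the ordering $1 > -\mu > \rho > 0$, which ranks the three elementary ``moves'' a player can make relative to the baseline strategy of covering a single full line (one segment, value $(\mu+1)M$): grabbing an uncontested gold from the opposite line, dodging a mine, and grabbing a contested gold, in strictly decreasing order of value. This ordering is what makes the greedy structure of an optimal response well defined.

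Next I would establish a best-response lemma. Fixing the opponent's coverage, a player's optimal strategy with at most $C$ segments is obtained from the full-line baseline by spending segments, essentially in pairs, on the most valuable available detours in the greedy order above; because the resources alternate gold/mine along each line and gold/mine across the two lines, a single detour that simultaneously grabs a gold and dodges the adjacent mine costs two segments, which is why the useful capability saturates at $\capA' = \min(\capA,\,2M+1)$ and $\capB' = \min(\capB,\,2M+1)$, the segment count of the perfect zigzag. The care here is in the boundary accounting at the two ends of the line and in the parity of the budget: an odd leftover segment buys a ``half detour'' (a terminal line switch that grabs one extra gold without needing to switch back), and this is exactly what produces the asymmetric floor arguments $\idiv{\capA'+t-1}{2}$ versus $\idiv{\capA'-t}{2}$.

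I would then construct the candidate equilibria. For each admissible value of $t \in \{0,1\}$ I would write down an explicit profile in which the golds are partitioned between the two players, with $\idiv{\capA'+t-1}{2}+\idiv{\capB'-t}{2}$ of them contested, every gold covered, and each player dodging as many mines as its budget permits; I would then verify via the best-response lemma that neither player can profitably deviate and that both strategies respect their segment budgets. The parameter $t$ encodes the binary symmetry of the construction (which player holds the boundary gold and the orientation of the zigzag): in the regime $\max(\capA,\,\capB) \le 2M$ both choices are feasible and generally give distinct payoffs, whereas a capped player forces the value of $t$, and when $\min(\capA,\,\capB) \ge 2M+1$ both players share all golds, dodge all mines, and the value of $t$ becomes irrelevant.

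The main work, and the expected obstacle, is the converse: showing that every pure Nash equilibrium falls into one of these classes, so that the capability transfer function equals exactly the claimed set. For this I would argue that at equilibrium no gold can be left uncovered (a player with a spare detour would profitably take it) and that each player dodges the maximal number of mines consistent with its budget and the opponent's coverage, which forces the coverage counts up to the $t$ symmetry. The delicate points are ruling out ``wasteful'' equilibria, handling the ties that arise precisely because $\rho+\mu$ can be negative, and carrying out the same end-of-line parity bookkeeping as in the best-response lemma. Finally, substituting the forced coverage counts, namely golds covered alone, golds shared, and mines covered, into the per-resource payoff sum and simplifying yields the stated closed forms for $\payAsym$ and $\payBsym$, with the three itemized cases following from comparing $\capA$ and $\capB$ to $2M$.
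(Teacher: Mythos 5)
Your proposal is correct and follows essentially the same route as the paper's proof: the per-resource payoff decomposition driven by the ordering $1 > -\mu > \rho > 0$, a greedy best-response characterization with parity/floor bookkeeping and saturation at $2M+1$ segments (the paper's \cref{lemma:agmg:disc-loc,lemma:agmg:gm-cnt}), the complete-gold-coverage property of equilibria (\cref{lemma:agmg:cover-all-gold}), and a final case analysis over $t$ and the capability regimes. One point to strengthen when executing the plan: ``no gold left uncovered'' cannot be justified by a \emph{spare} detour, since at equilibrium players exhaust their budgets; the paper instead shows, via an explicit restructuring algorithm, that a player can always rearrange a full-budget strategy to cover every gold the opponent misses while losing at most $-(\rho+\mu) < 1-\rho$ elsewhere --- this feasibility construction is the technical bulk of \cref{lemma:agmg:cover-all-gold} and is exactly the obstacle you flag as the main work.
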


In \agmg{}, when one player's capability increases, their own payoff increases
by $\rho$ or $-\mu$, but their opponent's payoff decreases by $\rho - 1$. If
both players get the same capability increment, the social welfare (i.e., the
sum of their payoffs) can increase, decrease, or stay the same, depending on the
sign of $2\rho-\mu-1$. If both players have the same capability $C$, the social
welfare is $\payAsym + \payBsym = (2\rho-\mu-1) (\min(C,\,2M+1)-1) + 2(\mu+1)M$,
which confirms Theorem 16 of \citet{ yang2022impact} up to a constant bias
because in \agmg{} we remove the last gold site on each line to simplify our
analysis.

In the following two sections, we first present three lemmas that characterize the
Nash equilibria in \agmg{}, and then derive the above results based on these
lemmas.

\subsection{Characteristics of Nash equilibria}

We introduce some notation:
\begin{itemize}
    \item A pair $(\fA,\, \fB)$ denotes a strategy profile, i.e., the
        strategies of both players.
    \item Given a strategy profile, $\payA{\fA}{\fB}$ and $\payB{\fA}{\fB}$ are
        the payoffs of individual players.
    \item Given a strategy $f(\cdot)$, $\setg(f)/\setm(f)$ and
        $\nrg(f)/\nrm(f)$ denote the locations and numbers of gold and mine
        sites covered by the strategy:
        \begin{align*}
            \setg(f)  &\defeq \condSet{4i}{0 \leq i < M \text{ and } f(4i)=1} \\
                & \hspace{2em}\cup
                \condSet{4i+1}{0 \leq i < M \text{ and } f(4i+1)=0} \\
            \setm(f)  &\defeq
                \condSet{4i+2}{0 \leq i < M \text{ and } f(4i+2)=1} \\
                & \hspace{2em}\cup
                \condSet{4i+3}{0 \leq i < M \text{ and } f(4i+3)=0} \\
            \nrg(f) &\defeq \card{\setg(f)} \\
            \nrm(f) &\defeq \card{\setm(f)}
        \end{align*}
    \item Given a strategy $f_p(\cdot)$ for player $p$, \emph{discontinuity
        points} (DPs) are the locations where $f_p(\cdot)$ changes the line that
        $p$ covers. We also differentiate between \emph{upward discontinuity
        points} (UDPs, denoted by $\discup(f)$) and \emph{downward
        discontinuity points} (DDPs, denoted by $\discdown(f)$):
        \begin{align*}
            \discup(f) &\defeq \condSet{i}{0 \leq i \leq 4M - 2 \text{ and }
            f(i) = 0 \text{ and } f(i+1) = 1} \\
            \discdown(f) &\defeq \condSet{i}{0 \leq i \leq 4M - 2 \text{ and }
            f(i) = 1 \text{ and } f(i+1) = 0}
        \end{align*}
        Note that $\seg(f) = \card{\discup(f)} + \card{\discdown(f)} + 1$.
    \item A strategy $f(\cdot)$ is a \emph{perfect cover} for resources located
        between $[a,\, b]$ if all gold sites are covered and all mine sites are
        avoided: $[a,\,b] \setminus \setg(f) = \emptyset$ and $[a,\, b] \cap
        \setm(f) = \emptyset$. We also call the resources $[a,\,b]$
        \emph{perfectly covered} in this case, and \emph{imperfectly covered}
        otherwise. Here $[a,\,b]$ denotes all integers in the interval: $[a,\,b]
        = \condSet{i}{a \leq i \leq b}$. Note that to perfectly cover resources
        $[4i,\,4j-1]$ for $i < j$, one needs $2(j-i)+1$ segments.
    \item \emph{Strict strategy spaces} use exactly the given number of segments:
        \begin{align*}
            \stistgLvl{1} &\defeq \stgLvl{1} \\
            \stistgLvl{C} &\defeq \stgLvl{C} - \stgLvl{C - 1} \\
                &= \condSet{f: [0,\, 4M-1]\mapsto \{0,\,1\}}{\seg(f) = C}
        \end{align*}
    \item A strategy profile $(\fA,\, \fB)$ is a \emph{complete-gold-coverage}
        for a \agmg{} if both players cover all gold sites together, i.e.,
        $\card{\setg(\fA) \cup \setg(\fB)} = 2M$.
\end{itemize}

First, we show that DPs only occur at certain locations:
\begin{lemma}
    \label{lemma:agmg:disc-loc}
    Let $f(\cdot)$ be a best response of a player given the other player's
    strategy. Upward discontinuity points in $f(\cdot)$ occur only at
    neighboring mine sites, and downward discontinuity points occur only at
    neighboring gold sites:
    \begin{align*}
        \forall i \in \discup(f) &: i \bmod 4 = 2 \\
        \forall i \in \discdown(f) &: i \bmod 4 = 0 \\
    \end{align*}
\end{lemma}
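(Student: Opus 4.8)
The plan is to argue by contradiction using a local \emph{single-coordinate flip}: I assume $f$ is a best response that contains a discontinuity point at a forbidden location and then exhibit a strategy $f'$ with strictly larger payoff that uses no more segments, contradicting the optimality of $f$. The starting observation is that across any discontinuity point both $f$ and the line indicator $y$ change value: since $y_i = (i+1)\bmod 2$ alternates we have $y_i \neq y_{i+1}$, while a DP at $i$ means $f(i)\neq f(i+1)$. Hence $f(i)=y_i$ if and only if $f(i+1)=y_{i+1}$, so across a DP the two flanking resources $i$ and $i+1$ are \emph{either both covered or both uncovered}. Concretely, a DDP at $i$ (where $f(i)=1$) covers both exactly when $y_i=1$, and a UDP at $i$ (where $f(i)=0$) covers both exactly when $y_i=0$.

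Next I would enumerate the four residues of $i \bmod 4$ and record the types of the flanking pair $(i,i+1)$: it is $(\text{gold},\text{gold})$ for $i\equiv 0$, $(\text{gold},\text{mine})$ for $i\equiv 1$, $(\text{mine},\text{mine})$ for $i\equiv 2$, and $(\text{mine},\text{gold})$ for $i\equiv 3$. Combining these types with the covers-both/covers-neither dichotomy and the residue values of $y_i$ shows that, for each residue and each DP direction, the configuration carries exactly one defect --- either a \emph{covered mine} (covering both across a pair that contains a mine) or an \emph{uncovered gold} (covering neither across a pair that contains a gold) --- with only two exceptions. Those exceptions are a DDP at $i\equiv 0$, which covers both golds, and a UDP at $i\equiv 2$, which avoids both mines: precisely the two locations the lemma permits.

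For every defective case I would remove the defect with a single flip. If a mine among $\{i,i+1\}$ is covered, flip the offending coordinate so the player no longer matches $y$ there, changing that resource's contribution from $\mu$ to $0$ and gaining $-\mu>0$; if a gold among $\{i,i+1\}$ is uncovered, flip the offending coordinate so the player matches $y$ there, gaining $r_g(n)\in\{1,\rho\}$, which is positive regardless of the opponent's fixed coverage. In each case only the single touched resource changes coverage, so the payoff strictly increases. The point that needs care is that the flip does not raise the segment count: setting $f(i)$ (or $f(i+1)$) equal to its neighbour deletes the DP at location $i$ and can only toggle the one boundary on its other side, so $\card{\discup(f)}+\card{\discdown(f)}$ changes by $-2$ or $0$ and $\seg(f')\leq\seg(f)$; at the endpoints $i=0$ or $i+1=4M-1$ there is no second boundary and the count strictly drops. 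Thus $f'$ lies in the same strategy space $\stgLvl{C}$ as $f$ yet achieves strictly higher payoff, the desired contradiction. The main obstacle is keeping this segment-count bookkeeping exact while simultaneously confirming the congestion detail that a newly covered gold always yields a strictly positive reward even when the opponent already covers it.
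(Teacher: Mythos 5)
Your proof is correct, and it takes a genuinely different route from the paper's. The paper argues window by window: fixing the endpoint values $f(4k)$ and $f(4k+4)$, it enumerates the placements of discontinuity points inside $[4k,\,4k+4]$ (\cref{fig:agmg:disc-loc-proof}) and identifies the payoff-maximizing one in each of the four endpoint cases, dismissing configurations with extra DPs via an ``it can be verified'' remark. You instead work one DP at a time: since $y_i \neq y_{i+1}$, the two resources flanking any DP are either both covered or both uncovered, which reduces the lemma to six defective residue/direction combinations, each repaired by a single-coordinate flip that strictly increases payoff (by $-\mu>0$ for a mine no longer covered, or by $r_g(n)\in\{1,\,\rho\}>0$ for a newly covered gold) while the number of boundaries changes by $-2$, $-1$, or $0$, so $\seg(f')\leq\seg(f)$ and $f'$ stays in the same space $\stgLvl{C}$. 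Your argument is more self-contained: it makes the segment-budget bookkeeping explicit --- precisely the point a local-improvement argument must get right, and one the paper leaves implicit --- it avoids any unverified multi-DP enumeration, and it uses only the standing assumptions $0<\rho<1$ and $\mu<0$. What the paper's window enumeration buys in exchange is a positive description of what locally optimal configurations look like (e.g., the no-DP versus two-DP alternatives when $f(4k)=f(4k+4)=1$), which it implicitly reuses when constructing complete-gold-coverage responses in \cref{lemma:agmg:cover-all-gold} and \cref{algo:moreseg}. One cosmetic slip in your write-up: ``exactly one defect'' is not accurate for a UDP at $i \equiv 0 \pmod 4$ (two uncovered golds) or a DDP at $i \equiv 2 \pmod 4$ (two covered mines); since repairing a single defect already yields a strict improvement, this does not affect the argument.
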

\begin{proof} % f{{{
    \newcommand{
        \centering
        \begin{adjustbox}{width=.98\textwidth}
            \begin{tikzpicture}
                \pgfmathsetmacro{\xsep}{0.5}
\pgfmathsetmacro{\ysep}{0.4}
\pgfmathsetmacro{\yofhrule}{\ysep+0.2}
\pgfmathsetmacro{\yoftitle}{\yofhrule+0.2}

\definecolor{linecolor}{rgb}{0.8,0.8,1.0}
\tikzset{line/.style={linecolor, dashed, thick}}

\foreach [
    evaluate=\i as \x using {\i*\xsep},
    evaluate=\i as \y using {Mod(\i+1, 2)*\ysep},
    evaluate=\i as \t using {int(Mod(\i, 4)/2)}
] \i in {0, ..., 4} {
    \ifthenelse{\t=1}{\draw (\x, \y) node[cross=2pt] {}}
    {\filldraw[black] (\x, \y) circle (2pt) node {}};
}

\draw[ultra thin] (-0.5, \yofhrule) -- (4 * \xsep+0.5, \yofhrule);

\node at (1 * \xsep, -0.2) {\scriptsize $r_0$};
\node at (4 * \xsep, \ysep-0.2) {\scriptsize $r_1$};

\foreach [evaluate=\i as \t using {int(Mod(\i, 4})] \i in {0, ..., 4} {
    \node at (\i * \xsep, \yoftitle) {\t};
}

% vim: tw=80 filetype=tex foldmethod=marker foldmarker=f{{{,f}}} spell spelllang=en

                \draw[line] (\xsep * -0.3, 0) --
                    ({\xsep * }, 0) -- ({\xsep * }, \ysep) --
                    (\xsep * 4.3, \ysep);
            \end{tikzpicture}
        \end{adjustbox}
    }[1]{
        \centering
        \begin{adjustbox}{width=.98\textwidth}
            \begin{tikzpicture}
                \pgfmathsetmacro{\xsep}{0.5}
\pgfmathsetmacro{\ysep}{0.4}
\pgfmathsetmacro{\yofhrule}{\ysep+0.2}
\pgfmathsetmacro{\yoftitle}{\yofhrule+0.2}

\definecolor{linecolor}{rgb}{0.8,0.8,1.0}
\tikzset{line/.style={linecolor, dashed, thick}}

\foreach [
    evaluate=\i as \x using {\i*\xsep},
    evaluate=\i as \y using {Mod(\i+1, 2)*\ysep},
    evaluate=\i as \t using {int(Mod(\i, 4)/2)}
] \i in {0, ..., 4} {
    \ifthenelse{\t=1}{\draw (\x, \y) node[cross=2pt] {}}
    {\filldraw[black] (\x, \y) circle (2pt) node {}};
}

\draw[ultra thin] (-0.5, \yofhrule) -- (4 * \xsep+0.5, \yofhrule);

\node at (1 * \xsep, -0.2) {\scriptsize $r_0$};
\node at (4 * \xsep, \ysep-0.2) {\scriptsize $r_1$};

\foreach [evaluate=\i as \t using {int(Mod(\i, 4})] \i in {0, ..., 4} {
    \node at (\i * \xsep, \yoftitle) {\t};
}

% vim: tw=80 filetype=tex foldmethod=marker foldmarker=f{{{,f}}} spell spelllang=en

                \draw[line] (\xsep * -0.3, 0) --
                    ({\xsep * #1}, 0) -- ({\xsep * #1}, \ysep) --
                    (\xsep * 4.3, \ysep);
            \end{tikzpicture}
        \end{adjustbox}
    }
    \newcommand{\addsubfig}[2]{
        \begin{subfigure}[t]{.23\textwidth}
            
        \centering
        \begin{adjustbox}{width=.98\textwidth}
            \begin{tikzpicture}
                
                \draw[line] (\xsep * -0.3, 0) --
                    ({\xsep * (#1+0.5)}, 0) -- ({\xsep * (#1+0.5)}, \ysep) --
                    (\xsep * 4.3, \ysep);
            \end{tikzpicture}
        \end{adjustbox}
    
            \vskip-1.5em
            \caption{Upward at #1. Payoff is #2.}
        \end{subfigure}
    }
    \begin{figure}[tb]
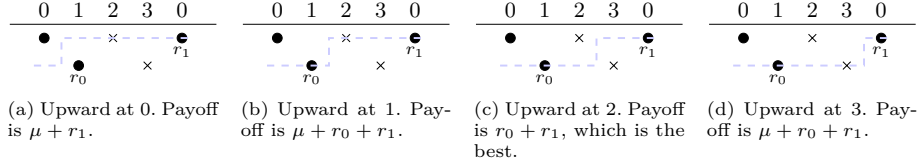

        \centering
        \addsubfig{0}{$\mu+r_1$}
        \hfill
        \addsubfig{1}{$\mu+r_0+r_1$}
        \hfill
        \addsubfig{2}{$r_0+r_1$, which is the best}
        \hfill
        \addsubfig{3}{$\mu + r_0+r_1$}
        \vskip-.5em
        \caption{
            Cases of a $f(4k)=0$ and $f(4k+4)=1$ in a local region with one DP.
            The numbers are locations of resources modulo 4. Dashed lines
            indicate a local part of the strategy.
            \label{fig:agmg:disc-loc-proof}
        }
    \end{figure}
    We consider cases in a local region for different values of $f(4k)$ and
    $f(4k+4)$.

    \begin{itemize}
        \item $f(4k)=0,\,f(4k+4)=1$:
            \Cref{fig:agmg:disc-loc-proof} shows the cases with one DP. The
            payoffs of covered gold sites are denoted as $r_0$ and $r_1$, which
            can be $1$ or $\rho$ depending on the opponent's strategy. Clearly,
            the payoff is maximized only when the DP is at location 2 modulo 4.
            It can be verified that using more DPs while maintaining $f(4k)=0$
            and $f(4k+4)=1$ does not improve payoff.
        \item $f(4k)=1,\,f(4k+4)=0$:
            Similarly, the best response in this case has one DDP at location 0.
        \item $f(4k)=f(4k+4)=0$:
            The best response should have no DP. If there are DPs, there should
            be one UDP and one DDP to cover one gold site and no mine site, but
            moving the DDP rightward to also cover the gold at $4k+4$ gives
            better payoff with the same number of segments.
        \item $f(4k)=f(4k+4)=1$:
            The best response should either have no DP (covering two gold sites
            and one mine site) or two DPs (covering three gold sites and no mine
            site) at locations 0 and 2 modulo 4.
    \end{itemize}
\end{proof} % f}}}

Now we show that the number of gold and mine sites covered by an optimal
strategy is fairly predictable, i.e., it only depends on $f(0)$ and $\seg(f)$:
\begin{lemma}
    \label{lemma:agmg:gm-cnt}
    If $f(\cdot)$ is a strategy that conforms to \cref{lemma:agmg:disc-loc}, then
    \begin{align*}
        \nrg(f) &= M + \idiv{\seg(f) + f(0) - 1}{2} \\
        \nrm(f) &= M - \idiv{\seg(f) - f(0)}{2} \\
    \end{align*}
\end{lemma}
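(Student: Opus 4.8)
The plan is to translate the local constraints of \cref{lemma:agmg:disc-loc} into a block-by-block count of covered sites over the $M$ groups of four consecutive resources, and then to reconcile that count with $\seg(f)$ using the alternating structure of discontinuity points. The identities $\nrg(f) = M + \card{\discdown(f)}$ and $\nrm(f) = M - \card{\discup(f)}$ will be the bridge, after which the stated floors follow from a parity argument.

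First I would extract the structural consequence of \cref{lemma:agmg:disc-loc}: since upward DPs occur only at locations $\equiv 2 \pmod 4$ and downward DPs only at locations $\equiv 0 \pmod 4$, no DP occurs at a location $\equiv 1$ or $\equiv 3 \pmod 4$. Hence $f(4k+1)=f(4k+2)$ for every $k$, and $f(4k+3)=f(4k+4)$ for $0\le k< M-1$. Writing $a_k := f(4k)$ for $0 \le k \le M-1$, $a_M := f(4M-1)$, and $b_k := f(4k+1)=f(4k+2)$, the lemma further forbids an upward DP at $4k$ and a downward DP at $4k+2$, which gives $a_k \ge b_k$ and $a_{k+1}\ge b_k$. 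Reading off the definitions of $\setg$ and $\setm$, block $k$ contributes $a_k + (1-b_k)$ covered gold sites (the gold at $4k$ needs $f=1$, the gold at $4k+1$ needs $f=0$) and $b_k + (1-a_{k+1})$ covered mine sites, so that $\nrg(f) = M + \sum_{k}(a_k-b_k)$ and $\nrm(f) = M + \sum_{k}(b_k-a_{k+1})$.

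Next I would identify the DP counts with these differences. A downward DP at $4k$ is exactly the event $a_k - b_k = 1$, and because $a_k - b_k \in \{0,1\}$ this yields $\card{\discdown(f)} = \sum_{k}(a_k-b_k)$; symmetrically, an upward DP at $4k+2$ is the event $a_{k+1}-b_k=1$, giving $\card{\discup(f)} = \sum_{k}(a_{k+1}-b_k)$. Substituting into the previous sums produces the compact identities $\nrg(f) = M + \card{\discdown(f)}$ and $\nrm(f) = M - \card{\discup(f)}$.

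Finally I would convert these into the claimed floors using that DPs strictly alternate in direction (a binary $f$ cannot rise twice without an intervening fall) and that $f(0)$ fixes the direction of the first DP. This forces $\card{\discup(f)}-\card{\discdown(f)} \in \{0,1\}$ when $f(0)=0$, and $\card{\discdown(f)}-\card{\discup(f)}\in\{0,1\}$ when $f(0)=1$. Writing $\seg(f) = \card{\discup(f)}+\card{\discdown(f)}+1$ and checking both parities then shows $\card{\discdown(f)} = \idiv{\seg(f)+f(0)-1}{2}$ and $\card{\discup(f)} = \idiv{\seg(f)-f(0)}{2}$, which combined with the identities above gives the lemma. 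I expect the main obstacle to be the first step: rigorously deducing the ``no DP at odd residues'' structure and correctly handling the right boundary at $4M-1$ (which has no continuity partner and forces the separate variable $a_M$); once the block decomposition is in place, the DP-count identities and the floor bookkeeping reduce to routine parity checks.
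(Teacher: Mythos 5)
Your proof is correct, but it establishes the central counting identity by a different mechanism than the paper. Both arguments pass through the same bridge: $\nrg(f) = M + \card{\discdown(f)}$ and $\nrm(f) = M - \card{\discup(f)}$, followed by the same alternation-plus-parity bookkeeping that converts $\card{\discdown(f)}$ and $\card{\discup(f)}$ into $\idiv{\seg(f)+f(0)-1}{2}$ and $\idiv{\seg(f)-f(0)}{2}$. The difference is in how the bridge is obtained. The paper peels discontinuity points off the right end one at a time, defining $f_{i+1}(x) = f_i(\min(x,\,x_i))$, observing that each removal of a DDP decreases the gold count by one and each removal of a UDP increases the mine count by one, and anchoring the induction at the single-line strategy with $\nrg = \nrm = M$. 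You instead decompose the $4M$ resources into $M$ blocks of four, introduce the variables $a_k = f(4k)$, $b_k = f(4k+1) = f(4k+2)$ (with the boundary convention $a_M = f(4M-1)$), compute each block's gold and mine contribution explicitly, and identify $\sum_k (a_k - b_k)$ and $\sum_k (a_{k+1}-b_k)$ with the DP counts. Your route is non-inductive and makes fully explicit two things the paper leaves to the reader: the claim that each DDP adds exactly one gold site and each UDP avoids exactly one mine site (which the paper asserts from \cref{lemma:agmg:disc-loc} without computation), and the treatment of the right boundary at $4M-1$. The paper's peeling argument is shorter and more conceptual, but your block decomposition is arguably the more airtight verification of the same facts; both are valid proofs of the lemma.
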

\begin{proof} % f{{{
We define a series of strategies $\{f_i\}$. Let $f_0 = f$ and define $f_{i+1}$
the strategy obtained by removing the last DP of $f_i$, i.e., $f_{i+1}(x) =
f_i(\min(x,\, x_i))$ where $x_i = \max \left( \discup(f_i) \cup \discdown(f_i)
\right)$. \Cref{lemma:agmg:disc-loc} implies that each DDP adds an extra gold
site and each UDP avoids a mine site, which means either $\nrg(f_{i}) -
\nrg(f_{i+1}) = 1$ or $\nrm(f_{i+1}) - \nrm(f_{i}) = 1$, depending on whether the
last DP of $f_i$ is DDP or UDP. It follows that $\nrg(f_i) = \nrg(f_j) +
\card{\discdown(f_i)} - \card{\discdown(f_j)}$ and $\nrm(f_i) = \nrm(f_j) -
\card{\discup(f_i)} + \card{\discup(f_j)}$ for any pair $i,\, j$.

We first assume $f(0)=1$. In this case, since UDPs and DDPs are interleaving, we have
$\card{\discdown(f)} = \idiv{\seg(f)}{2}$ and $\card{ \discup(f)} = \idiv{\seg(f)
-1}{2}$. Let $c=\seg(f)$. We also know that $\nrg(f_c) = \nrm(f_c)=M$ since
$f_c$ covers exactly one line. Therefore, $\nrg(f) = \nrg(f_0) = \nrg(f_c) +
\card{\discdown(f_0)} - \card{\discdown(f_c)} =  M + \idiv{\seg(f)}{2}$ and
$\nrm(f) = M-\idiv{\seg(f) - 1}{2}$. A similar analysis
for the case $f(0)=0$ gives $\nrg(f) = M + \idiv{\seg(f)-1}{2}$ and $\nrm(f)
= M - \idiv{\seg(f)}{2}$. \Cref{lemma:agmg:gm-cnt} summarizes these results
compactly.
\end{proof} %f}}}

Now let's shift our attention from one player's strategy to a strategy profile.
\begin{lemma}
    \label{lemma:agmg:cover-all-gold}
    If  $\rho < -\mu < 1$ and $(\fA, \, \fB)$ is a pure Nash equilibrium when
    players are limited to the strict strategy spaces $\stistgLvl{\capA}$ and
    $\stistgLvl{\capB}$, then $(\fA, \, \fB)$ is a complete-gold-coverage.
\end{lemma}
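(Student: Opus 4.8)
The plan is to argue by contradiction: assume $(\fA,\fB)$ is a pure Nash equilibrium that is \emph{not} a complete-gold-coverage, so some gold site is avoided by both players, and then exhibit a profitable unilateral deviation that stays inside the strict strategy space, contradicting equilibrium. If $\max(\capA,\capB)\ge 2M+1$ one checks the claim directly, since that player can perfectly cover all $2M$ gold sites on its own, so the substantive regime is $\capA,\capB\le 2M$. Because $(\fA,\fB)$ is an equilibrium, each strategy is a best response and hence conforms to \cref{lemma:agmg:disc-loc}; since that lemma restricts upward discontinuities to locations $\equiv 2$ and downward ones to $\equiv 0 \pmod 4$, no discontinuity sits at a location $\equiv 1$ or $3 \pmod 4$, which pins each player's behaviour on every block $[4k,4k+3]$ down to a handful of canonical shapes. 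In particular the line-$1$ gold at $4k$ is covered by a player exactly when $f(4k)=1$ and the line-$0$ gold at $4k+1$ exactly when $f(4k+1)=0$, so the line-$1$ gold at $4k$ is \emph{uncovered} iff $\fA(4k)=\fB(4k)=0$ (both players sit on line $0$ across block $k$), and symmetrically the line-$0$ gold is uncovered iff both sit on line $1$. By the symmetry between the two lines it suffices to rule out an uncovered line-$1$ gold, say at block $k$, where both players are on line $0$.

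The engine of the proof is a single \emph{segment-preserving boundary flip}. Consider a maximal run of blocks on which one player, say A, stays on line $0$ and which contains block $k$; suppose it ends at block $a<M-1$, where A switches back up via an upward discontinuity at $4a+2$. I would modify A so that it switches up one block earlier, i.e.\ slide that discontinuity from $4a+2$ to $4a-2$ and make block $a$ all-line-$1$. This changes exactly two golds: A now covers the line-$1$ gold at $4a$ (previously uncovered, hence worth the full $1$) and drops the line-$0$ gold at $4a+1$, which the opponent still covers, so the union loses nothing; meanwhile a mine penalty is shed on one block and an equal mine penalty is incurred on the adjacent block, so the two cancel. The net effect is that $\seg(\fA)$ is unchanged (the deviation stays in $\stistgLvl{\capA}$) while A's payoff rises by $1-\rho>0$. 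The mirror-image move at the \emph{entering} boundary of the run (sliding the downward discontinuity that begins the run one block to the right, trading the shared line-$0$ gold just before the run for the uncovered line-$1$ gold just inside it) works identically. The only inequality exploited is $\rho<1$, which is implied by the standing hypothesis $\rho<-\mu<1$.

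It remains to ensure that, for one of the two players, the relevant boundary block sees the \emph{opponent} still on line $0$, so the traded line-$0$ gold is genuinely shared (worth only $\rho$) and the flip is strictly profitable. Let $[k_1^A,k_2^A]$ and $[k_1^B,k_2^B]$ be the two line-$0$ runs through block $k$. I would apply the exiting flip to whichever run ends first: if $\min(k_2^A,k_2^B)<M-1$ the earlier-ending player can flip at its exit while the other is still on line $0$ there. Otherwise both runs reach the last block, and I would instead apply the entering flip to whichever run starts later, provided $\max(k_1^A,k_1^B)\ge 1$. The only configuration defeating both flips is that both runs equal the whole range $[0,M-1]$, which forces $\capA=\capB=1$ and $\fA\equiv\fB\equiv 0$; there the deviation to the constant line-$1$ strategy raises a player's payoff from $M(\rho+\mu)$ to $M(1+\mu)$, again by $\rho<1$, so this too fails to be an equilibrium. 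In every case equilibrium is contradicted, and the symmetric argument disposes of an uncovered line-$0$ gold, so $(\fA,\fB)$ must be a complete-gold-coverage.

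The main obstacle is the segment-count bookkeeping. Because a strategy must use \emph{exactly} $\capA$ (resp.\ $\capB$) segments, the naive deviation that simply covers the missing gold adds segments and is illegal, and flipping a block's line cascades along the run until a discontinuity is reached. The boundary flip is engineered precisely so that a discontinuity is \emph{relocated} rather than created and so that the shed and acquired mine penalties cancel, leaving the segment count invariant; the delicate part of the write-up is verifying this cancellation against the conforming block shapes of \cref{lemma:agmg:disc-loc} and checking the short list of boundary and degenerate cases (runs of length one, runs touching either end of the resource line, and the $\capA=\capB=1$ case).
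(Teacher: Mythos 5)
Your approach is genuinely different from the paper's: the paper argues globally, using \cref{lemma:agmg:gm-cnt} to write a player's payoff as $(1-\rho)T(\fA,\fB)$ plus terms depending only on $f(0)$, $\seg(f)$ and the opponent, shows that any strategy increasing the total coverage $T$ is strictly better (this uses both $1+\mu>0$ and $\rho+\mu<0$), and then proves that $T=2M$ is always attainable inside the strict space by an explicit construction padded to exactly $\capA$ segments (Algorithm 1). Your ``segment-preserving boundary flip'' is instead a local exchange argument, and the core computation is correct: relocating one UDP (resp.\ DDP) by one block trades one gold for an adjacent one, the shed and incurred mine penalties cancel, the segment count is unchanged, and the net gain is $1-\rho>0$ provided the acquired gold is unshared and the sacrificed gold is shared.

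The gap is that your exhaustiveness claim is false, and the missed configurations are exactly where the flip machinery breaks down. The exit flip at block $a$ slides a UDP from $4a+2$ to $4a-2$, so it requires $a\geq 1$; at $a=0$ the UDP falls off the left edge, a segment disappears, and the deviation leaves $\stistgLvl{\capA}$. Similarly the entering flip needs the run to start at a block $\geq 1$. Now consider $k=0$ (uncovered gold at location $0$), with $\fA$ having line-0 run exactly $\{0\}$ (UDP at location $2$, which conforms to \cref{lemma:agmg:disc-loc}) and $\fB$ having run $[0,\,k_2^B]$ with $k_2^B\geq 1$. Then A's exit flip is unavailable ($a=0$), both entering flips are unavailable ($k_1^A=k_1^B=0$), and B's exit flip at $k_2^B$ is not profitable: A is on line 1 there, so B trades a gold A already covers (gain $\rho$) for a gold A does not cover (loss $1$), a net loss. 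This configuration defeats all four flips yet is not your fallback case $\fA\equiv\fB\equiv 0$; refuting it requires a non-local deviation in which A deletes the UDP at $2$ and re-inserts a segment elsewhere without losing gold --- precisely the global rearrangement that the paper's Algorithm 1 provides. Two further cracks in the same (left-boundary) wall: a run equal to $[0,\,M-1]$ does not force capability $1$, since a UDP at $4M-2$ is compatible with it, so your ``constant line-1'' fallback deviation can leave the strict space when the capability is $2$; and your remark that only $\rho<1$ is used should have been a warning sign, because whenever a mine fails to cancel (exactly the boundary situations) profitability needs $1-\rho+\mu>0$, which does not follow from $\rho<-\mu<1$.
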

\begin{proof} % f{{{
    We prove this statement in two steps. We first show that for any player,
    their payoff is maximized when they cover as many unoccupied gold sites as
    possible. Then we show that complete-gold-coverage is always feasible.

    Let $T(\fA,\,\fB)$ be the total number of gold sites covered by a strategy
    profile: $T(\fA,\, \fB) \defeq \card{\setg(\fA) \cup \setg(\fB)}$.

    Without loss of generality, we focus on player A. We show that if there is a
    strategy $\fAp$ such that $T(\fA,\,\fB) < T(\fAp,\, \fB)$ where $\{\fA,\,
    \fAp\} \subset \stistgLvl{\capA}$, then $(\fA,\,\fB)$ is not a Nash
    equilibrium because A can get better payoff by switching to $\fAp$. Note
    that the number of gold sites covered by both players in the strategy
    profile $(\fA,\, \fB)$ is $\nrg(\fA) + \nrg(\fB) - T(\fA,\,\fB)$, while the
    number of gold sites covered by A exclusively is $T(\fA,\,\fB) - \nrg(\fB)$,
    which implies:
    \begin{align*}
        \payA{\fA}{\fB} &=
            \left(T(\fA,\,\fB) - \nrg(\fB)\right) \cdot 1 \\
            &\hspace{2em} +
            \left( \nrg(\fA) + \nrg(\fB) - T(\fA,\,\fB) \right)
                \cdot \rho +
            \nrm(\fA) \cdot \mu
    \end{align*}
    Substituting the results of \cref{lemma:agmg:gm-cnt} into the above:
    \begin{align*}
        \payA{\fA}{\fB}
        &= \rho \idiv{\capA + \fA(0) - 1}{2} +
            (-\mu) \idiv{\capA - \fA(0)}{2} +
            (1-\rho)T(\fA,\,\fB) \\
            & \hspace{2em} + (\rho-1)\nrg(\fB) + (\rho+\mu)M
    \end{align*}
    Let $h(f) \defeq \rho\idiv{\capA + f(0) - 1}{2} + (-\mu) \idiv{\capA - f(0)}
    {2}$ be the first two terms. One can verify that $h(\fAp) - h(\fA) \in
    \{0,\, \rho + \mu,\, -\rho-\mu\}$ for all possible values of $\capA$, $\fA(0)
    $, and $\fAp(0)$. Note $\rho < -\mu$ implies $\rho+\mu<0$. Thus $h(\fAp) -
    h(\fA) \geq \rho + \mu$.
    \begin{align*}
        \payA{\fAp}{\fB} - \payA{\fA}{\fB}
        &= h(\fAp) - h(\fA) +
            \left(T(\fAp,\, \fB) - T(\fA,\, \fB) \right)(1-\rho) \\
        &\geq (\rho + \mu) + (1 - \rho) \\
        & > 0
    \end{align*}

    Therefore, A first maximizes $T(\fA,\,\fB)$ and then maximizes $h(\fA)$ in
    their best response. The maximum possible value of $T(\fA,\,\fB)$ is $2M$
    which is achieved when $(\fA,\,\fB)$ is a complete-gold-coverage.

    Next we show that complete-gold-coverage is always feasible. We assume
    $\capA \geq \capB$ WLOG. For any strategy $\fB(\cdot)$ played by player B
    that conforms to \cref{lemma:agmg:disc-loc}, we show that there exists $\fA
    \in \stistgLvl{\capA}$ such that $T(\fA,\, \fB) = 2M$.

    If $\capA \geq 2M$, then A can cover all gold sites trivially. Now we
    consider the case $1 \leq \capA \leq 2M-1$. We first construct a strategy
    $\fAp(\cdot)$ that may or may not use all the segments. For $0 \leq k < M$,
    we set $\fAp(4k) = 1-\fB(4k)$ and $\fAp(4k+3) = 1 - \fB(4k+3)$. When
    $\fAp(4k) = \fAp(4k+3)$, we use the same value for $\fAp(4k+1)$ and
    $\fAp(4k+2)$; otherwise we add one discontinuity point at $\fAp(4k)$ or
    $\fAp(4k+2)$ according to \cref{lemma:agmg:disc-loc}. Note that
    \cref{lemma:agmg:disc-loc} also implies $\fAp(4k-1) = \fAp(4k)$. It is easy
    to verify that $T(\fAp,\, \fB) = 2M$ and $\seg(\fAp) \leq \seg(\fB) \leq
    \capA$. We then derive $\fA$ from $\fAp$ using \cref{algo:moreseg} so that
    $\seg(\fA) = \capA$.

    \begin{algorithm}
        \caption{Modify a strategy to use more segments}
        \label{algo:moreseg}
        \begin{algorithmic}[1]
            \Require Game scale $M \geq 2$
            \Require Player capability $\capA$ such that $\capA \leq 2M-1$
            \Require A strategy $\fAp(\cdot)$ that conforms to
                \cref{lemma:agmg:disc-loc} such that $\seg(\fAp) \leq \capA$ and
                the last four resources are imperfectly covered.
            \Ensure A strategy $\fA(\cdot)$ that conforms to
                \cref{lemma:agmg:disc-loc} such that $\seg(\fA) = \capA$,
                $\setg(\fAp) \subseteq \setg(\fA)$, and $\fA(0) = \fAp(0)$.
            \State $k \gets 0$
            \State $\fA \gets \fAp$
            \While {$\capA - \seg(\fA) \geq 2$}
                \algotext{1}{
                    When entering the loop, all resources in $[4,\, 4k-1]$ are
                    perfectly covered and $\fA(4k)=1$ when $k \geq 1$. Perfectly
                    covering $[4,\,4k-1]$ requires $\seg(\fA) \geq 2(k-1) + 1$.
                    We also have $\seg(\fA) \leq \capA - 2 \leq 2M - 3$ due to
                    the loop condition, thus $2k-1 \leq \seg(\fA) \leq 2M-3$
                    which means $k \leq M - 1$. When $k \geq 1$, each iteration
                    modifies $\fA(\cdot)$ to perfectly cover $[4k,\,4k+3]$ using
                    no more than two new segments.} \label{algo:moreseg:loopinv}
                \If {$\fA(4k+3) = 0$}
                    \algotext{2}{
                        \cref{lemma:agmg:disc-loc} ensures $\fA(4k+i) = 0$
                        for $2 \leq i \leq 5$.}
                    \State $\fA(4k+3) \gets 1$
                    \If {$k + 1 < M$}
                        \State $\fA(4k+4) \gets 1$
                    \EndIf
                \ElsIf {$k > 0$ or $\fA(4k)=1$}
                    \algotext{2}{We now have $\fA(4k)=\fA(4k+3)=1$.}
                    \State $\fA(4k+1) \gets 0$
                    \State $\fA(4k+2) \gets 0$
                \EndIf
                \State $k \gets k + 1$
            \EndWhile
            \If {$\capA - \seg(\fA) = 1$}
                \algotext{1}{We have $\seg(\fA) = \capA - 1 < 2M-1$ in this
                    case. Thus the resources $[4, 4M-1]$ are imperfectly
                    covered. Due to our requirement on $\fAp(\cdot)$ and the way
                    we construct $\fA(\cdot)$, the last four resources are
                    imperfectly covered. We modify the strategy on the last few
                    resources to use one more segment.}
                \If {$\fA(4M-1) = 0$}
                    \State $\fA(4M-1) \gets 1$
                \ElsIf {$\fA(4M-2)=1$}
                    \algotext{2}{\cref{lemma:agmg:disc-loc} implies $\fA(4M-i) =
                        1$ for $1 \leq i \leq 4$}
                    \State $\fA(4M-3) \gets 0$
                    \State $\fA(4M-2) \gets 0$
                    \State $\fA(4M-1) \gets 0$
                \Else
                    \algotext{2}{Now $\fA(4M-2)=0$ and $\fA(4M-1)=1$. Since
                        the last four resources are imperfectly covered, we have
                        $\fA(4M-i)=0$ for $2\leq i \leq 6$.}
                    \State $\fA(4M-5) \gets 1$
                    \State $\fA(4M-4) \gets 1$
                    \State $\fA(4M-1) \gets 0$
                \EndIf
            \EndIf
            \State \Return $\fA$
        \end{algorithmic}
    \end{algorithm}

\end{proof} % f}}}

\subsection{\CapTrFn{} of \agmg}

Recall that in the proof of \cref{lemma:agmg:cover-all-gold}, we have shown that
given a strategy profile $(\fA,\,\fB)$ that is a complete-gold-coverage, A's
payoff is
\begin{align*}
    \payA{\fA}{\fB} &= \rho \idiv{\seg(\fA) + \fA(0) - 1}{2} +
        (-\mu)\idiv{\seg(\fA) - \fA(0)}{2} \\
        &\hspace{2em} + (\rho-1)\nrg(\fB) + (\mu - \rho + 2)M
\end{align*}

For two strategy profiles $(\fA,\,\fB)$ and $(\fAp,\, \fB)$ that are both
complete-gold-coverage, we make the following two observations that can be
verified using the above expansion of $\payA{\fA}{\fB}$:
\begin{enumerate}
    \item If $\seg(\fA) < \seg(\fAp)$ and $\min(-\mu,\,\rho)>0$, then $\payA{\fA}
        {\fB} < \payA{\fAp}{\fB}$.
    \item If $\fA(0)=1$, $\fAp(0)=0$, $\seg(\fA)=\seg(\fAp)$, and
        $\rho + \mu < 0$, then $\payA{\fA}{\fB} \leq \payA{\fAp}{\fB}$.
\end{enumerate}

In other words, the best strategy $f_p^*(\cdot)$ of player $p$ given the
strategy $f_o(\cdot)$ of the other player satisfies:
\begin{enumerate}
    \item $(f_p^*,\,f_o)$ is a complete-gold-coverage.
    \item $f_p^*(\cdot)$ uses the full capability of player $p$ up to $2M+1$
        line segments, i.e., $\seg(f_p^*)=\min(C_p,\,2M+1)$.
    \item If there is a strategy that starts at line 0 (i.e., $f_p(0)=0$) and
        satisfies both of the above constraints, then player $p$ plays such a
        strategy.
\end{enumerate}

Next we derive the \captrfn{} for the different
cases. We assume $\capB \leq \capA$ WLOG:
\begin{itemize}
    \item $2M+1 \leq \capB \leq \capA$: All resources are perfectly covered by
        both players. They receive the same payoff of $2M\rho$.
    \item $\capB < 2M+1 \leq \capA$: A perfectly covers all resources. B starts
        at $\fB(0)=0$ and uses all their capability.
        \begin{align*}
            \nrg(\fB) &=  M + \idiv{\capB - 1}{2} \\
            \payAsym &= 2M + (\rho - 1)\nrg(\fB) \\
                &= (\rho+1)M + (\rho - 1)\idiv{\capB - 1}{2} \\
            \payBsym &= \left(M + \idiv{\capB - 1}{2}\right) \rho +
                \left(M - \idiv{\capB}{2}\right) \mu \\
                &= \idiv{\capB - 1}{2}\rho - \idiv{\capB}{2}\mu + (\rho + \mu)M
        \end{align*}
    \item $\capB \leq \capA < 2M+1$: Let B first play an arbitrary strategy
        $\fB(\cdot)$. If $\fB(0)=0$, A will set $\fA(0)=1$ to ensure a
        complete-gold-coverage; otherwise if $\fB(0)=1$, A will set $\fA(0)=0$
        due to the second observation noted above. A can derive one of their
        best response $\fA(\cdot)$ according to the proof of
        \cref{lemma:agmg:cover-all-gold}. Following a similar reasoning from B's
        perspective, it can be shown that $\fB(\cdot)$ is also a best response
        of B given A's strategy $\fA(\cdot)$. Therefore, $(\fA,\,\fB)$ is a Nash
        equilibrium. There are two different classes of Nash equilibria: one
        with $\fA(0)=0$ and $\fB(0)=1$, and the other with $\fAp(0)=1$ and
        $\fBp(0)=0$. Let $\fA(0)=t$ and $\fB(0)=1-t$. We have:
        \begin{align*}
            \payAsym &= \idiv{\capA+t-1}{2}\rho - \idiv{\capA - t}{2}\mu
                +\idiv{\capB-t}{2}(\rho-1) + (\mu+1)M \\
            \payBsym &= \idiv{\capB-t}{2}\rho -  \idiv{\capB+t-1}{2}\mu
                +\idiv{\capA+t-1}{2}(\rho-1) + (\mu+1)M \\
            t &\in \{0,\,1\}
        \end{align*}
\end{itemize}
\Cref{thm:agmg} summarizes the results of these three cases.

% vim: tw=80 filetype=tex foldmethod=marker foldmarker=f{{{,f}}} spell spelllang=en

%
% ---- Bibliography ----
%
% BibTeX users should specify bibliography style 'splncs04'.
% References will then be sorted and formatted in the correct style.
%
\newpage
\bibliographystyle{splncs04nat}
\bibliography{refs}

\begin{thebibliography}{5}
\providecommand{\natexlab}[1]{#1}
\providecommand{\url}[1]{\texttt{#1}}
\providecommand{\urlprefix}{URL }
\expandafter\ifx\csname urlstyle\endcsname\relax
  \providecommand{\doi}[1]{doi:\discretionary{}{}{}#1}\else
  \providecommand{\doi}{doi:\discretionary{}{}{}\begingroup
  \urlstyle{rm}\Url}\fi

\bibitem[{Nash~Jr(1950)}]{nash1950equilibrium}
Nash~Jr, J.F.: Equilibrium points in n-person games. Proceedings of the
  national academy of sciences \textbf{36}(1), 48--49 (1950)

\bibitem[{Nisan et~al.(2007)Nisan, Roughgarden, Tardos, and
  Vazirani}]{nisan2007algorithmic}
Nisan, N., Roughgarden, T., Tardos, E., Vazirani, V.: Algorithmic Game Theory.
  Cambridge University Press (2007), ISBN 9780521872829

\bibitem[{Rosenthal(1973)}]{rosenthal1973class}
Rosenthal, R.W.: A class of games possessing pure-strategy {Nash} equilibria.
  International Journal of Game Theory \textbf{2}(1), 65--67 (1973)

\bibitem[{Simaan and Cruz(1973)}]{simaan1973stackelberg}
Simaan, M., Cruz, J.B.: On the stackelberg strategy in nonzero-sum games.
  Journal of Optimization Theory and Applications \textbf{11}(5), 533--555
  (1973)

\bibitem[{Yang et~al.(2022)Yang, Jia, and Rinard}]{yang2022impact}
Yang, Y., Jia, K., Rinard, M.: On the impact of player capability on congestion
  games. arXiv preprint arXiv:2205.09905  (2022)

\end{thebibliography}

\end{document}